\documentclass[abstract=true, DIV=14]{scrartcl}

\usepackage{amsmath,amssymb,amsthm}
\usepackage[mathscr]{euscript}
\usepackage[shortlabels]{enumitem}
\usepackage{graphicx}
\usepackage{subcaption}
\usepackage{float}
\usepackage{xcolor}
\usepackage{microtype}
\usepackage[T1]{fontenc}
\usepackage{nicefrac,xfrac}
\usepackage[english]{babel}
\usepackage{upgreek}
\usepackage[colorlinks]{hyperref}
\usepackage[capitalize]{cleveref}
\usepackage{authblk}
\usepackage{tikz}
\usepackage{todonotes}
\usepackage{changepage}
\usetikzlibrary{calc,positioning}

\makeatletter\input{t1cmss.fd}\makeatother
\DeclareFontShape{T1}{cmss}{bx}{n}%
  {<5><6><7><8><9><10->ecsx1000}{}

\newcommand{\ceil}[1]{\lceil #1 \rceil}

\newcommand{\key}{\mathrm{key}}
\newcommand{\col}{\mathrm{col}}
\newcommand{\wit}{\mathrm{wit}}
\renewcommand{\alpha}{\upalpha}

\setkomafont{captionlabel}{\bfseries}
\renewcaptionname{english}{\figurename}{Fig.}
\setcapindent{0pt}
\newtheorem{theorem}{Theorem}[section]
\newtheorem{lemma}[theorem]{Lemma}

\newtheorem{corollary}[theorem]{Corollary}

\theoremstyle{definition}

\title{An optimal deterministic algorithm for finding a strict saddlepoint}

\author{Justin Dallant\thanks{Email: \href{mailto:justin.dallant@tu-dresden.de}{justin.dallant@tu-dresden.de}}}
\affil{Faculty of Computer Science, TU Dresden, Germany}

\date{}

\begin{document}

\maketitle

\begin{abstract}
Given an $n\times n$ matrix $A$, a \emph{saddlepoint} of $A$ is an entry that is the maximum in its row and the minimum in its column. It is a \emph{strict saddlepoint} if no other entry in its row or column has the same value.

Finding a non-strict saddlepoint requires $\Theta(n^2)$ matrix queries in the worst case. In contrast, a strict saddlepoint can be found with only $O(n)$ queries. In 1991, Bienstock, Chung, Fredman, Schäffer, Shor, and Suri---and, independently, Byrne and Vaserstein---showed that one can find a strict saddlepoint (or certify that none exists) in $O(n\log n)$ time using $O(n)$ matrix queries. In 2024, Dallant, Haagensen, Jacob, Kozma, and Wild gave an $O(n\log^* n)$-time algorithm, followed shortly after by an optimal \emph{randomized} algorithm running in $O(n)$ time with high probability. Whether $O(n)$ time could also be achieved deterministically was left open by these works.

Here we resolve this question by presenting a simple \emph{deterministic} algorithm that finds a strict saddlepoint, or reports that none exists, in optimal $O(n)$ time. Our algorithm combines elementary ingredients from previous approaches with linear-time selection from a collection of sorted lists.
\end{abstract}

\medskip
{\paragraph*{Acknowledgements} \small I thank L{\'{a}}szl{\'{o}} Kozma for useful comments on an earlier draft of this paper.}

\section{Introduction}
Given a matrix $A$ with entries from an ordered set, a \emph{saddlepoint} of $A$ is an entry that is a maximum in its row and a minimum in its column. A \emph{strict saddlepoint} is strictly larger than every other entry in its row and strictly smaller than every other entry in its column. In game-theoretic terms, a saddlepoint is a pure-strategy Nash equilibrium of the two-player zero-sum game whose payoff matrix is $A$.

This seemingly small distinction between strict and non-strict saddlepoints has significant algorithmic consequences. In the non-strict case, finding a saddlepoint, or even deciding whether one exists, may require inspecting $\Omega(n^2)$ entries of an $n\times n$ matrix \cite{LlewellynToveyTrick}; this lower bound also holds in expectation for randomized algorithms \cite{DallantEtAlRand}. The simple quadratic-time algorithms described by Knuth are thus worst-case optimal \cite[\S~1.3.2]{Knuth}. A matrix may also have several non-strict saddlepoints, although they all have the same value.

A strict saddlepoint, by contrast, is unique and can be found after querying a vanishingly small fraction of all matrix entries. Llewellyn, Tovey, and Trick first obtained a subquadratic algorithm with running time $O(n^{\log_2 3})$, where $\log_2 3\approx 1.585$ \cite{LlewellynToveyTrick}.
In 1991, Bienstock, Chung, Fredman, Sch\"affer, Shor, and Suri \cite{BienstockEtAl}---and, independently, Byrne and Vaserstein \cite{ByrneVaserstein}---gave deterministic $O(n\log n)$-time algorithms using $O(n)$ queries. Their running time is dominated by sorting or by maintaining a sorted set under a linear number of updates.

In 2024, Dallant, Haagensen, Jacob, Kozma, and Wild reduced the deterministic running time to $O(n\log^* n)$ \cite{DallantEtAlDet}. Their approach finds a relaxed object called a pseudo-saddlepoint by recursive decomposition and then checks whether an entry of the same value is a strict saddlepoint. The same authors subsequently gave a Las Vegas randomized algorithm whose running time is $O(n)$ with high probability \cite{DallantEtAlRand}. Whether the same bound can be achieved deterministically was left open.

We answer this question affirmatively, thus completely settling the complexity status of the problem.

\begin{theorem}\label{thm:main}
Given an $n\times n$ matrix $A$, we can identify its strict saddlepoint, or report that none exists, deterministically in $O(n)$ worst-case time.
\end{theorem}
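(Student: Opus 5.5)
We plan to reduce the problem to finding a \emph{candidate} entry and then verifying it. Verification is trivial: given a candidate $(i,j)$, scan row $i$ and column $j$ in $O(n)$ time and check strictness. It is therefore enough to produce, in $O(n)$ deterministic time, a single entry with the following property: if a strict saddlepoint exists, it is that entry (or the entry is guaranteed to share its value, in which case we locate it within its row/column by a further linear scan).

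The main elimination tool is the classical pairwise comparison. For rows $i$ and columns $j,k$, if $A[i][j] < A[i][k]$, then column $j$'s entry in row $i$ is not the row maximum. For the strict saddlepoint $(r,c)$ we have $A[r][c] > A[r][k]$ for all $k\ne c$ and $A[r][c] < A[i][c]$ for all $i\ne r$. This gives the standard facts. If $A[i][j] \ge A[r][c]$, then row $i$ cannot be $r$ unless $j=c$. More usefully, for any row $i$ and column $j$, either $A[i][j] > A[r][c]$ rules out column $j$ being $c$ (when $i=r$ this is impossible), or $A[i][j] < A[r][c]$ rules out $i=r$. Concretely, we keep active sets $R,C$ and a threshold interval. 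Each row $i\in R$ has an associated value $m_i=\max_{j\in C} A[i][j]$ lower-bounding its maximum. Each column has $\mu_j=\min_{i\in R} A[i][j]$ upper-bounding its minimum. The saddle value $v$ satisfies $\mu_j \le \ldots$. The key invariant we aim for is $\max_j \mu_j \le v \le \min_i m_i$ restricted appropriately, so any row with $m_i < \max_j\mu_j$ or any column with $\mu_j > \min_i m_i$ can be discarded.

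The algorithm proceeds in rounds, halving the active dimension by a constant factor each round, so that total work is geometric and $O(n)$. In a round we pair up active rows and spend $O(1)$ queries per active row/column. Following the Bienstock et al.\ / Byrne--Vaserstein approach, each query either kills a row or a column, so $O(n)$ queries suffice overall. Their $\log n$ overhead comes from repeatedly needing the current extreme value (the median or max of $m_i$'s) as entries get updated. We replace the sorted set by batching. Within a round, each surviving row carries a short sorted list of its queried values (sorted because we extend it monotonically, e.g.\ querying along a "staircase"). To decide which rows and columns to eliminate, we need the element of a given rank among the union of these lists. We obtain it with the known linear-time selection from a collection of sorted lists (Frederickson--Johnson), in time linear in the number of lists rather than total length. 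Taking a weighted median threshold $\tau$, one side of the comparison with $\tau$ eliminates a constant fraction of the remaining rows or columns, since $\tau$ is either below $v$ (killing the rows whose maximum-so-far is below $\tau$) or above it (killing the columns). We decide which case holds by a single $O(|R|+|C|)$ scan testing whether some active row certifies a value $\ge\tau$ against all columns, in the pseudo-saddlepoint spirit of \cite{DallantEtAlDet}.

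The main obstacle is the accounting step. We must guarantee that each round costs time linear in the \emph{current} number of active rows and columns, and that a constant fraction is removed, even though query lists grow unevenly. We would handle this by charging each query to the row or column it eventually eliminates, which gives $O(n)$ queries in total. The Frederickson--Johnson selection then makes the per-round ranking cost proportional to the number of lists, i.e.\ active lines, which decreases geometrically. Summing over rounds gives $O(n)$. A final verification scan then yields \cref{thm:main}.
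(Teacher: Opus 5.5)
Your proposal has a genuine gap, and its elimination rules also have a sign error. Every entry of the saddle row $r$ is at most $v$, and every entry of the saddle column $c$ is at least $v$. So an entry larger than $v$ rules out its \emph{row}, and an entry smaller than $v$ rules out its \emph{column}; you state the opposite. With $m_i,\mu_j$ computed from queried entries, the true relation is $m_r\le v\le\mu_c$. Your rules (discard rows with $m_i<\max_j\mu_j$; if $\tau<v$, kill rows whose maximum-so-far is below $\tau$) would therefore discard the saddle row itself; the first rule does so whenever $A_{r,c}$ has not yet been queried. The corrected rule is: if $v\le\tau$, delete rows containing a queried entry $>\tau$; otherwise delete columns containing a queried entry $\le\tau$. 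With that fix, your threshold round is just Lemma~\ref{lem:one-sided}. Also, the values queried along a row are not sorted in any staircase order. To get nondecreasing lists you have to pass to prefix maxima with column tie-breaking, as in the proof of Theorem~\ref{thm:wide-reduction}.

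The substantive gap is your claim that the number of active lines decreases geometrically. A threshold test only guarantees that a constant fraction of the rows \emph{or} of the columns dies, and the outcome of the test, not you, decides which. In Lemma~\ref{lem:one-sided}, outcome~(i) removes only a quarter of the \emph{shorter} side, so the adversary can keep shrinking the short side. For an $h\times w$ matrix with $h\ll w$, each round still costs $\Theta(w)$ but removes only $O(h)$ lines. The round costs then do not form a geometric series, and iterating such rounds can cost $\Theta(n\log n)$. Charging queries to eliminated lines does not fix this, because the query count ($O(n)$ since 1991) was never the bottleneck. Likewise, selection from $q$ sorted lists costs $O(q+k)$ for $k$ selected items, so it is linear in the number of lists only when $k=O(q)$.

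The missing idea is a way to delete a constant fraction of the \emph{long} side of a lopsided matrix in $O(h+w)$ time, with no case split the adversary can steer. The paper gets this from Theorem~\ref{thm:wide-reduction}. After a constant number of one-sided reductions, either both dimensions have shrunk or $w\ge 6\kappa h$. In the latter case:
\begin{enumerate}[(a)]
\item each row gets the list of prefix maxima over lazily assigned, pairwise distinct fresh columns;
\item the $K=h+\ceil{w/(4\kappa)}$ smallest items are selected with Lemma~\ref{lem:list-selection} in $O(h+w)$ time;
\item Lemma~\ref{lem:certificate} then unconditionally deletes all selected columns of each active row except its witness $c_r^*$, because the first unselected item of every other row has key larger than $A_{r,c_r^*}$.
\end{enumerate}
This deletes at least $K-h=\ceil{w/(4\kappa)}$ columns. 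That is what makes both dimensions drop to $\alpha n$ per round and gives $T(n)\le T(\alpha n)+O(n)$.
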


The running time is optimal up to a constant factor: verifying a proposed strict saddlepoint requires comparing it with the other $n-1$ entries of its row and the other $n-1$ entries of its column. Our algorithm is comparison-based and assumes only that any pair of matrix entries can be compared in constant time.

The only non-elementary black box we use is the following standard consequence of linear-time heap selection \cite{Frederickson}; see also the simplified algorithm of Kaplan et al.\ \cite[Theorem~8]{KaplanEtAl} and the related work of Frederickson and Johnson \cite{FredericksonJohnson}.

\begin{lemma}\label{lem:list-selection}
There is a deterministic algorithm that, given $q$ nondecreasing lists and an integer $k$, identifies the $k$ smallest items in their union in $O(q+k)$ time, accessing every list from left to right (i.e.\ no element is ever accessed before its left neighbor). In particular, there is an absolute constant $\kappa\geq 1$ such that the algorithm queries at most $\kappa(q+k)$ list entries in total.
\end{lemma}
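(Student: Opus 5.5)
The plan is to reduce the statement to Frederickson's selection algorithm for heap-ordered trees \cite{Frederickson}. That algorithm, given the root of a (possibly infinite, implicitly represented) heap-ordered tree of bounded degree, identifies the $k$ smallest nodes in $O(k)$ time. It only ever inspects a node after inspecting its parent, and it inspects $O(k)$ nodes in total. Alternatively one can use the simplified version of Kaplan et al.\ \cite[Theorem~8]{KaplanEtAl}, which has the same guarantees. So it suffices to arrange the $q$ sorted lists into such a tree with $O(q)$ preprocessing, in a way that respects left-to-right access.

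\textbf{Building the tree.} First, read the head (first element) of each list; this costs exactly $q$ queries. Then build a binary min-heap on these $q$ heads by bottom-up heapify in $O(q)$ time. Next, attach to each head, as an additional child, the second element of its list; to that element attach the third element as its only child, and so on. Each list thus hangs below its head as a path. Every node has at most three children: two heap children and one list successor.

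\textbf{Checking the required properties.} The tree is heap-ordered. Among heads this holds by the heap property. Along a list it holds because the lists are nondecreasing. At a head-to-successor edge it holds for the same reason. The list nodes are represented implicitly: a node stores a pointer to its list and an index, and its child is generated only on demand. Hence a list entry is queried only when its parent, the previous entry of the same list, has already been visited. This gives the left-to-right access property. Lists that run out, or a total size smaller than $k$, are handled by treating missing entries as $+\infty$ sentinels and discarding them at the end, or simply by having no child.

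\textbf{Running Frederickson's algorithm.} Run Frederickson's algorithm from the root to select the $k$ smallest nodes. It visits $O(k)$ nodes. Combining this with the $q$ head queries gives at most $\kappa(q+k)$ queries for a suitable absolute constant $\kappa$, and $O(q+k)$ time overall.

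The main obstacle is confirming that the cited heap-selection algorithm genuinely works with degree-bounded trees whose children are produced lazily, so that no list entry is touched before it becomes a child of a visited node. I expect this to follow directly from the structure of Frederickson's (or Kaplan et al.'s) procedure, which expands nodes only from already-extracted ones. If their statement is phrased only for binary heaps, I would convert the degree-$3$ tree to a binary one by the standard left-child/right-sibling trick with dummy copies, at constant-factor cost.
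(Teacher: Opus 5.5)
Your proposal is correct and is the standard reduction the paper relies on. The paper gives no proof of this lemma and only cites it as a consequence of linear-time heap selection (Frederickson, or the soft-heap algorithm of Kaplan et al.); your construction (heapify the $q$ heads, hang each list below its head as a lazily generated path, run heap selection) yields both the $q+O(k)$ query bound and the left-to-right access property.

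Your binary-tree fallback is unnecessary, since the soft-heap selection algorithm handles any heap-ordered tree of constant degree with a suitably small error parameter. As stated it would also need a little more care: plain left-child/right-sibling breaks heap order unless siblings are sorted, and dummy copies may themselves be selected, so you would select $O(k)$ nodes and filter out the dummies.
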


The starting point for our algorithm is the simpler $O(n\log\log n)$ algorithm from \cite[\S~6]{DallantEtAlDet}. That algorithm has two phases. In the first phase, it repeatedly applies a one-sided reduction; each application deletes a constant fraction of the rows or columns in linear time. This continues until the shorter side of the matrix has length at most $n/\log n$. In the second phase, the algorithm reduces the longer side---say, the width---by discarding columns through repeated heap operations. Once both dimensions are $O(n/\log n)$, it invokes the known $O(N\log N)$-time algorithm on a square matrix of side length $N=O(n/\log n)$. This final step takes $O(n)$ additional time.

Our algorithm keeps the first phase, but changes both the stopping rule and the second phase. We stop the first phase after only a constant number of reductions. At that point, either both dimensions have decreased by a constant factor, or one dimension is a sufficiently small constant fraction of the other. In the latter case, rather than repeatedly performing heap operations to reduce the longer dimension, we define one sorted list for each row or column, depending on which dimension is longer, and process all lists simultaneously using linear-time selection. This removes a constant fraction of the longer dimension in one linear-time step. Repeating this constant-factor reduction yields a geometric series and therefore an overall linear running time.

We recall some preliminaries and useful results from previous work in \S~\ref{sec:prelim}. We then formalize the reduction of the longer dimension in \S~\ref{sec:thin} and complete the algorithm in \S~\ref{sec:overall}.

\section{Preliminaries}\label{sec:prelim}

In this section, we recall basic facts and auxiliary results from previous work, including proofs for completeness.

For a positive integer $n$, let $[n]=\{1,\ldots,n\}$. Let $A$ be an $h\times w$ matrix, and write $A_{i,j}$ for its entry in row $i$ and column $j$.

An entry $A_{i,j}$ is a \emph{saddlepoint} if
\[
A_{i,j}\geq A_{i,j'}\quad\text{for all }j'\ne j,
\qquad\text{and}\qquad
A_{i,j}\leq A_{i',j}\quad\text{for all }i'\ne i.
\]
It is a \emph{strict saddlepoint} if all these inequalities are strict. If a matrix has a strict saddlepoint, it is unique. Throughout the paper, \emph{saddlepoint} will mean strict saddlepoint unless explicitly stated otherwise.

\subsection{Distinct keys and deleting rows/columns}

It is convenient to assume that all matrix entries are distinct. This is without loss of generality. Replace each entry $A_{i,j}$ by
\[
\widehat A_{i,j}=(A_{i,j},i,j),
\]
ordered lexicographically. The entries of $\widehat A$ are pairwise distinct, and every strict saddlepoint of $A$ is also a strict saddlepoint of $\widehat A$. We run the algorithm on $\widehat A$ and verify the reported entry in $A$ at the end. Thus, for the remainder of the analysis, we assume that all entries are distinct.

We repeatedly pass to submatrices obtained by deleting rows and columns. We always maintain the following invariant:

\begin{quote}
If the original matrix has a saddlepoint, the current submatrix contains the same saddlepoint.
\end{quote}

Deleting a row or column that does not contain the saddlepoint preserves the invariant. We will also occasionally restore previously deleted rows or columns, to ease the exposition. This does not break the invariant, because a saddlepoint of the original matrix cannot be invalidated by restoring rows or columns of that matrix.

Deletions and restorations are performed implicitly rather than by maintaining an explicitly modified copy of the matrix. We maintain arrays containing the indices of the current rows and columns and use them to query the original matrix. This bookkeeping is straightforward because the arrays need not remain ordered: permuting the rows or columns of a matrix does not affect the existence of saddlepoints.

\subsection{A threshold test}

The following staircase test is a streamlined version of the feasibility test used in \cite{DallantEtAlDet}.

\begin{lemma}[\cite{DallantEtAlDet}]\label{lem:threshold}
Let $A$ be an $m\times n$ matrix with distinct entries, and let $s$ be a specified value. In $O(m+n)$ time, we can return one of the following two correct outcomes:
\begin{enumerate}[(i)]
    \item the saddlepoint of $A$, if it exists, has value greater than $s$;
    \item the saddlepoint of $A$, if it exists, has value at most $s$.
\end{enumerate}
\end{lemma}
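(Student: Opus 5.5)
We will use a staircase walk through the matrix relative to the threshold $s$. Call an entry \emph{small} if it is at most $s$ and \emph{large} otherwise. Suppose a saddlepoint $A_{i^*,j^*}$ exists. If its value exceeds $s$, then every entry of column $j^*$ is large, since it is the column minimum. If its value is at most $s$, then every entry of row $i^*$ is small, since it is the row maximum. These two situations exclude each other. A column that is entirely large and a row that is entirely small cannot both exist, because they would meet in an entry that is both large and small. The test will therefore decide which of the two structures can possibly be present, and the answer determines the outcome.

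The procedure keeps a candidate row pointer $i$ and a candidate column pointer $j$, starting at $i=1$ and $j=1$. At each step it queries $A_{i,j}$. If $A_{i,j}>s$, then row $i$ is not an all-small row, so we discard row $i$ and increment $i$. If $A_{i,j}\le s$, then column $j$ is not an all-large column, so we discard column $j$ and increment $j$. Each step discards one row or one column, so after at most $m+n$ steps one side is exhausted, which gives $O(m+n)$ time.

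There are two cases at termination.
\begin{itemize}
\item If all rows are discarded, no row is all-small. The saddlepoint, if it exists, cannot have value at most $s$, so we report (i).
\item If all columns are discarded, no column is all-large. By the same reasoning we report (ii).
\end{itemize}

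The main point to check is that the discards are sound. Each discarded row or column carries a witness entry that violates the property it would need: a large entry in a row that would have to be all-small, or a small entry in a column that would have to be all-large. This witness argument is elementary. No further verification is needed, because the lemma only promises a conditional statement about the saddlepoint's value, not its existence.
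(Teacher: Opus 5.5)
Your proposal is correct and is essentially the paper's own proof. The paper runs the same staircase walk from $(1,1)$, moving down when the entry exceeds $s$ and right otherwise, and concludes (i) or (ii) according to which boundary is reached. Your phrasing, discarding rows that cannot be all-small and columns that cannot be all-large, is a restatement of the paper's argument that every row maximum exceeds $s$ (in the first case) or every column minimum is at most $s$ (in the second).
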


\begin{proof}
Perform the following search in the matrix, starting at the top-left position $(1,1)$. At an entry $q$, move down if $q>s$, and move right otherwise. If the search exits the matrix through the bottom, then it encountered an entry greater than $s$ in every row. Hence every row maximum is greater than $s$. Thus the saddlepoint value, if one exists, is greater than $s$; return (i).

Otherwise, the search exits through the right boundary. It then encountered an entry at most $s$ in every column, so every column minimum is at most $s$. Thus the saddlepoint value, if one exists, is at most $s$; return (ii).
\end{proof}

\subsection{A one-sided reduction}

We next extract the part of the earlier $O(n\log\log n)$ approach that we need.

\begin{lemma}[\cite{DallantEtAlDet}]\label{lem:one-sided}
Let $A$ be an $m\times n$ matrix with distinct entries, where $m,n\geq 4$. In $O(m+n)$ time, we can produce a submatrix preserving every saddlepoint of $A$ by deleting at least $m/4$ rows or at least $n/4$ columns.
\end{lemma}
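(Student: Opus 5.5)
The plan is to use the threshold test of Lemma~\ref{lem:threshold} with a carefully chosen threshold $s$, and then delete rows or columns according to which outcome it returns. The relevant structure is this: if the saddlepoint has value greater than $s$, then any column whose minimum is at most $s$ cannot contain it, since the saddlepoint is the minimum of its column. Symmetrically, if the saddlepoint has value at most $s$, then any row whose maximum is greater than $s$ cannot contain it, since the saddlepoint is the maximum of its row. So it suffices to pick $s$ such that at least $n/4$ columns contain an entry $\le s$ and at least $m/4$ rows contain an entry $>s$. Either outcome of the test then deletes the required number of lines.

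To find such an $s$ in linear time, I will use a cheap sample rather than exact row maxima or column minima.
\begin{itemize}
\item Pair up the rows and the columns, and look at $2\times 2$ blocks, or more simply a diagonal-type set of $\min(m,n)$ entries.
\item Concretely, assume without loss of generality $m\le n$ (the other case is symmetric).
\item Take a set $S$ of about $n$ entries arranged so that every column contains exactly one entry of $S$ and every row contains at least $\lfloor n/m\rfloor\ge 1$ entries of $S$. For example, take $A_{(j \bmod m)+1,\,j}$ for $j\in[n]$.
\item Let $s$ be the median of $S$, computed by linear-time selection in $O(n)$.
\end{itemize}
Then at least $n/2$ columns contain an entry of $S$ that is $\le s$, which gives the column side. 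For rows, about $n/2$ entries of $S$ exceed $s$, but they might be concentrated in few rows. This is the main obstacle: a row containing many large sampled entries counts only once.

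To handle the row side, I would instead balance the sample so that each row and each column receive nearly the same number of sampled entries. I would choose $S$ as a union of $\lceil n/m\rceil$ "wrapped diagonals", each a partial permutation. Then each row has at most $\lceil n/m\rceil\le 2n/m$ entries of $S$. With $\ge n/2$ entries of $S$ above the median, at least $(n/2)/(2n/m)=m/4$ distinct rows contain an entry $>s$. The column side holds as before, since each column contains exactly one sampled entry and half of them are $\le s$, giving $n/2\ge n/4$ columns. Care is needed with rounding: counting the number above versus at most the median and using distinctness gives the $\lfloor\cdot\rfloor$ slack, and the hypothesis $m,n\ge 4$ should absorb it. If the counts come out slightly short, I would take $s$ to be a suitable quantile instead of the median.

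Finally, I would run Lemma~\ref{lem:threshold} on $A$ with this $s$ in $O(m+n)$ time. In case (i) I delete every column containing a sampled entry $\le s$, which is at least $n/4$ columns. In case (ii) I delete every row containing a sampled entry $>s$, which is at least $m/4$ rows. Marking these lines takes $O(|S|)=O(m+n)$ time, and the argument in the first paragraph shows that the saddlepoint, if it exists, survives. The total cost is $O(m+n)$.
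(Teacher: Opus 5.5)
Your proposal is correct and is essentially the paper's proof with rows and columns exchanged. The paper assumes $m\ge n$ and samples one entry per row ($A_{i,\lceil in/m\rceil}$), so each column holds at most $\lceil m/n\rceil\le 2m/n$ samples; it then takes the median and applies the threshold test. You instead sample one entry per column and use the multiplicity bound on the row side. Your first concrete sample $A_{(j\bmod m)+1,j}$ is already the balanced one, and the plain median handles the rounding, since $\lfloor n/2\rfloor/\lceil n/m\rceil\ge m/4$ whenever $n\ge m\ge 2$, so no quantile adjustment is needed.
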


\begin{proof}
Suppose $m\geq n$.
For each row $i\in[m]$, choose the entry
\[
d_i=A_{i,\phi(i)},
\qquad
\phi(i)=\ceil{\frac{in}{m}}.
\]
Every column contains at most $\ceil{m/n}$ of the chosen entries. Compute the median $s$ of $d_1,\ldots,d_m$ in $O(m)$ time and run the threshold test of Lemma~\ref{lem:threshold}.

Suppose first that a saddlepoint, if it exists, must be greater than $s$ (outcome (i) of the test). Delete every column containing a chosen entry $d_i\leq s$. Such a column cannot contain the saddlepoint: the saddlepoint is a minimum in its column, so its value would be at most $d_i\leq s$. At least $m/2$ chosen entries are at most $s$, and every column contains at most $\ceil{m/n}\leq 2m/n$ chosen entries. Hence we delete at least
\[
\frac{m/2}{\ceil{m/n}}\geq \frac n4
\]
columns.

Suppose instead that a saddlepoint, if it exists, must be at most $s$ (outcome (ii) of the test). Delete every row whose chosen entry satisfies $d_i > s$. Such a row cannot contain the saddlepoint, since the maximum in this row is at least $d_i>s$. At least $m/2-1\geq m/4$ rows are deleted.

If $n>m$, we apply the result to the transpose matrix with the order of the entries reversed (written $-A^\mathsf{T}$ when the entries are numeric). This exchanges the roles of rows and columns while preserving saddlepoints.
\end{proof}

\section{Reducing wide matrices}\label{sec:thin}

We now give the main new ingredient. The following certificate justifies the column deletions.

\begin{lemma}\label{lem:certificate}
Let $A$ be an $h\times w$ matrix with distinct entries. Fix a row $r\in[h]$ and a nonempty set of columns $C$, and let $c^*\in C$ maximize $A_{r,c}$ over $c\in C$. If every other row contains an entry larger than $A_{r,c^*}$ (not necessarily in a column of $C$), then no column in $C\setminus\{c^*\}$ contains a saddlepoint.
\end{lemma}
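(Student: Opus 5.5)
We argue by contradiction: suppose some column $c\in C\setminus\{c^*\}$ contains the saddlepoint, say at position $(i,c)$. We then split into two cases according to whether the saddlepoint lies in the distinguished row $r$ or in another row. In both cases we will find an entry in row $i$ that exceeds $A_{i,c}$, contradicting that a saddlepoint is the strict maximum of its row.

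\emph{Case $i=r$.} Since $c^*$ maximizes $A_{r,\cdot}$ over $C$ and $c\neq c^*$, distinctness of entries gives $A_{r,c^*}>A_{r,c}$. Both entries lie in row $r$, so $A_{r,c}$ is not the maximum of its row. This contradicts the assumption that $A_{r,c}$ is a saddlepoint.

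\emph{Case $i\neq r$.} The saddlepoint $A_{i,c}$ is the minimum of column $c$, so $A_{i,c}<A_{r,c}$ (strict, since $i\ne r$ and entries are distinct). By the choice of $c^*$ we also have $A_{r,c}\le A_{r,c^*}$, hence $A_{i,c}<A_{r,c^*}$. By hypothesis, row $i$ contains some entry $A_{i,j}>A_{r,c^*}$, so $A_{i,j}>A_{i,c}$. Then $j\neq c$, and $A_{i,c}$ is not the maximum of row $i$, again a contradiction.

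The only point needing care is the chain of inequalities in the second case, which relates row $r$, column $c$ and row $i$ through the threshold $A_{r,c^*}$. Once this is written out, the lemma follows; no earlier result of the paper is needed beyond the definition of a (strict) saddlepoint and the distinctness assumption.
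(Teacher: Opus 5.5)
Your proposal is correct and follows the same route as the paper: the paper likewise notes that $A_{r,c}<A_{r,c^*}$ rules out row $r$. For any other row $i$ it derives $A_{i,c}<A_{r,c}<A_{r,c^*}$ and then uses the hypothesized larger entry in row $i$ to reach the same contradiction.
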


\begin{proof}
Fix $c\in C\setminus\{c^*\}$. The entry in row $r$ and column $c$ is not the row maximum, since $A_{r,c}<A_{r,c^*}$. Now consider a different row $i\ne r$. If $A_{i,c}$ were a saddlepoint, then it would be strictly smaller than $A_{r,c}$ because it is the strict minimum in column $c$. Hence
\[
A_{i,c}<A_{r,c}<A_{r,c^*}.
\]
But row $i$ contains an entry larger than $A_{r,c^*}$, so $A_{i,c}$ is not the maximum of row $i$, a contradiction.
\end{proof}

\begin{theorem}\label{thm:wide-reduction}
Let $\kappa$ be the constant from Lemma~\ref{lem:list-selection}. Given an $h\times w$ matrix $A$ with distinct entries and $w\geq 6\kappa h$, we can find, in $O(h+w)$ time, a set of at least $\ceil{w/(4\kappa)}$ columns, none of which contains a saddlepoint of $A$.
\end{theorem}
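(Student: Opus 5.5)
The plan is to produce, for many rows $r$ at once, a certificate in the sense of Lemma~\ref{lem:certificate} with column sets $C_r$ that are pairwise disjoint. Each application of the lemma then lets us delete $|C_r|-1$ columns. To make the sets disjoint, I partition the $w$ columns into $h$ blocks $B_1,\dots,B_h$, each of size $b\geq\lfloor w/h\rfloor\geq 6\kappa$. Row $r$ will only ever draw its set $C_r$ from its own block $B_r$.

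\textbf{Lists and selection.} Sorting is unaffordable, so I use \emph{prefix maxima} as the nondecreasing lists. For row $r$, order its columns with $B_r$ first and the remaining columns afterwards. Let $L_r$ be the list whose $j$-th element is the maximum of $A_{r,\cdot}$ over the first $j$ columns of this order. To make all list elements distinct, break ties by the pair (row, position). Each $L_r$ is nondecreasing and can be generated on the fly: producing the next element costs one matrix query. This is exactly why the left-to-right access guarantee of Lemma~\ref{lem:list-selection} matters.

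I run Lemma~\ref{lem:list-selection} on $L_1,\dots,L_h$ with some $k<w$, for instance $k=\lceil w/2\rceil$. This takes $O(h+w)$ time. Let $t$ be the $k$-th smallest selected element and $p_r$ the length of the selected prefix of $L_r$. Since $k<w$, no list is entirely selected. Hence every row $i$ has an unselected prefix maximum exceeding $t$, and that prefix maximum is a genuine entry of row $i$ with value $>t$.

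\textbf{Certificates.} For each row $r$, set $C_r$ to be the first $\min(p_r,b)$ columns of $B_r$, and let $M_r$ be the maximum of row $r$ over $C_r$. This maximum is a selected prefix maximum, so $M_r\le t$. By the previous paragraph, every other row has an entry $>t\ge M_r$, so Lemma~\ref{lem:certificate} applies to $(r,C_r)$. Because the blocks are disjoint, we may delete $\bigcup_r\bigl(C_r\setminus\{c_r^*\}\bigr)$, which has size $\sum_r(\min(p_r,b)-1)^+$.

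\textbf{Main obstacle.} The hard step is guaranteeing that this sum is at least $\lceil w/(4\kappa)\rceil$. We have $\sum_r p_r=k$, and the $-1$ terms lose at most $h\le w/(6\kappa)$. The real danger is the overflow $\sum_r(p_r-b)^+$: selection could concentrate almost all of $k$ in one row, which then contributes only $b-1\approx w/h$.

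My first attempt is to truncate each list at $b$, so that $p_r\le b$ and the count becomes $k-h$. This, however, breaks the guarantee that each row has an entry $>t$, since a truncated list can be fully selected. The number of such exhausted rows is at most $k/b$. So I would choose $k$ of order $hb/2$, together with the budget $\kappa(h+k)$, so that at most half of the rows are exhausted.

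It then remains to handle the exhausted rows without scanning them. Their whole block lies below $t$, so I would try either of two routes:
\begin{enumerate}[(a)]
\item Charge each exhausted row to its full block $B_r$, which is deletable minus one column whenever the certificate holds.
\item Use Lemma~\ref{lem:threshold} with threshold $t$ to decide which side the saddlepoint value lies on, and use that outcome to discard the exhausted rows' certificate requirement.
\end{enumerate}
Making this step rigorous, and balancing $k$ against $\kappa$, is where I expect the constant $6\kappa$ and the bound $w/(4\kappa)$ to come from.
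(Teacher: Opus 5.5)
Your setup is sound as far as it goes. With untruncated prefix-maximum lists, every row keeps an unselected item, so each $(r,C_r)$ is a valid certificate. You also correctly see that the count can collapse when the selection concentrates in one row. But that counting step is the whole difficulty, and the proposal leaves it unresolved; neither route you sketch closes it.

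Truncation does not only hurt the exhausted rows themselves. A certificate for row $r$ quantifies over \emph{all} other rows, so one exhausted row can invalidate every certificate. For example, suppose every entry of row $i$ is smaller than every other entry of $A$. Then row $i$ is exhausted, its maximum is the saddlepoint, and that maximum may sit in a column of $C_r\setminus\{c_r^*\}$ for some non-exhausted row $r$. So route (a) presupposes exactly the certificate that is missing.

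Route (b) does handle outcome (i): every column of $\bigcup_r C_r$ then contains an entry $\le t$ and can be deleted. In outcome (ii), however, it only lets you discard non-exhausted \emph{rows}. The theorem asks for columns, and in the main algorithm the rows are already the short side. About the surviving exhausted rows you know nothing outside their own blocks.

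The missing idea is to drop the fixed partition and assign columns to list items on demand. Start $L_r$ at column $r$. Whenever the selection algorithm reads a new item of any list, give that item the smallest column not yet used by any list. Since Lemma~\ref{lem:list-selection} reads every list left to right, these lazily built lists behave like fixed nondecreasing lists. The sets $C_r$ are then automatically disjoint, and no list is ever exhausted as long as the total number of reads stays below $w$.

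Now select $K=h+\ceil{w/(4\kappa)}$ items, then read the first unselected item of every list. This costs at most $\kappa(h+K)+h\le\tfrac{11}{12}w$ reads, which is exactly where $w\ge 6\kappa h$ is used. Every active row then has a valid certificate. Summing over active rows, the number of deletable columns is $\sum_r(t_r-1)=K-a\ge\ceil{w/(4\kappa)}$, where $t_r$ is the length of the selected prefix of $L_r$ and $a\le h$ is the number of active rows.

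The overflow problem disappears because a row that absorbs many selections also absorbs that many fresh columns. It no longer spills into blocks reserved for other rows.
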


\begin{proof}
We construct one nondecreasing list $L_r$ for each row $r\in[h]$. Across all lists, no two items are assigned the same matrix column. It also associates with each item a witness column for the largest entry seen so far in that list.

\paragraph{The lists.}
For row $r$, let $c_{r,1}<c_{r,2}<\ldots$ be the columns assigned to successive items of $L_r$. Define
\[
p_{r,j}=\max_{1\leq \ell\leq j} A_{r,c_{r,\ell}},
\]
and let $z_{r,j}\in\{c_{r,1},\ldots,c_{r,j}\}$ be the unique column attaining this maximum. The $j$th list item is
\[
L_r[j]=(p_{r,j},c_{r,j}),
\]
ordered lexicographically. We write
\[
\key(L_r[j])=p_{r,j},\qquad
\col(L_r[j])=c_{r,j},\qquad
\wit(L_r[j])=z_{r,j}.
\]
Because the prefix maxima $p_{r,j}$ never decrease, and because the assigned column indices increase within each list whenever the key stays unchanged, every list is nondecreasing.

We generate the lists lazily as they are queried from left to right. Initially, set $c_{r,1}=r$ (i.e.\ $L_r[1]=(A_{r,r},r)$), for every $r\in[h]$; this is possible because $w\geq h$. For $j\geq2$, when querying $L_r[j]$ for the first time, we assign the smallest column $c$ not assigned previously to any item and set $c_{r,j}=c$. Thus all assigned columns across all lists are distinct, and at every moment they form an initial segment $[m]$ of the columns for some $m$. Moreover, $p_{r,j}$ can be computed in constant time from $p_{r,j-1}$ and $A_{r,c_{r,j}}$, as $p_{r,j} = \max\{p_{r,j-1}, A_{r,c_{r,j}}\}$.

This lazy generation is compatible with Lemma~\ref{lem:list-selection}: the algorithm reads every list from left to right, and every answer is consistent with a nondecreasing list. The output of the selection algorithm is the same as if it were given the fully constructed lists in advance (the algorithm does not ``see'' that the lists are lazily generated). The query bound below shows that, in our use case, the lazy generation never runs out of columns to assign to queried items.

\paragraph{The selection.}
Apply Lemma~\ref{lem:list-selection} to the $h$ lists and select
\[
K=h+\ceil{\frac{w}{4\kappa}}
\]
items. The selection algorithm queries at most $\kappa(h+K)$ items. We then additionally query the first unselected item in every list (if it has not already been queried by the selection algorithm), using at most $h$ additional queries. The total number of queried items is at most
\begin{align*}
\kappa(h+K)+h 
&\leq \kappa\left(2h+\frac{w}{4\kappa}+1\right)+h\\
&\leq \kappa\left(3h+\frac{w}{4\kappa}+1\right)\\
&\leq \frac{3}{4}w+\kappa\\
&\leq \frac{11}{12}w
< w.
\end{align*}
Here we used $\kappa \geq 1$, $w\geq 6\kappa h$ and $h\geq1$, which also imply $w\geq6\kappa$. Hence an unused column is available for every query, and all queried items are generated in $O(h+w)$ total time.

\paragraph{The deletions.}
Let $t_r$ be the number of selected items from $L_r$. Since $L_r$ is nondecreasing, these items form the prefix
\[
L_r[1],\ldots,L_r[t_r],
\]
and
\[
\sum_{r=1}^h t_r=K.
\]
Call row $r$ \emph{active} if $t_r>0$. For an active row $r$, let
\[
C_r=\{\col(L_r[1]),\ldots,\col(L_r[t_r])\},
\qquad
c_r^*=\wit(L_r[t_r]).
\]
By construction, $c_r^*\in C_r$ and
\[
A_{r,c_r^*}=\max_{c\in C_r} A_{r,c}.
\]

Fix an active row $r$, and write $q_r=A_{r,c_r^*}$. Consider any other row $i\ne r$. The first unselected item $L_i[t_i+1]$ is at least as large as every selected item in the lexicographic order. Therefore its key is at least $q_r$. These two keys are values of matrix entries in different rows, and all entries of $A$ are distinct, so equality is impossible. Hence
\[
\key(L_i[t_i+1])>q_r.
\]
Thus every row other than $r$ contains an entry larger than $A_{r,c_r^*}$. By Lemma~\ref{lem:certificate}, every column in $C_r\setminus\{c_r^*\}$ can be deleted.

The sets $C_r$ are pairwise disjoint. If $a$ is the number of active rows, then $a\leq h$, and the total number of deletable columns is
\[
\sum_{r:t_r>0} (t_r-1)
=K-a
\geq \ceil{\frac{w}{4\kappa}}.
\]
This proves the theorem.
\end{proof}

Applying Theorem~\ref{thm:wide-reduction} to $A^\mathsf{T}$ with the order of the entries reversed (or to $-A^\mathsf{T}$ when the entries are numeric) gives the symmetric statement.

\begin{corollary}\label{cor:wide-symmetric}
Given an $h\times w$ matrix $A$ with distinct entries and $h\geq 6\kappa w$, we can find, in $O(h+w)$ time, a set of at least $\ceil{h/(4\kappa)}$ rows, none of which contains a saddlepoint of $A$.
\end{corollary}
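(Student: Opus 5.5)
The plan is to reduce Corollary~\ref{cor:wide-symmetric} to Theorem~\ref{thm:wide-reduction} by transposing and reversing the order. We define the $w\times h$ matrix $B$ by $B_{j,i}=A_{i,j}$ and compare entries of $B$ with the reversed order. That is, $B_{j,i}\prec B_{j',i'}$ if and only if $A_{i,j}>A_{i',j'}$. For numeric entries this is just $B=-A^\mathsf{T}$.

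The first step is to check that saddlepoints correspond. Suppose $A_{i,j}$ is a strict saddlepoint of $A$: it is strictly largest in row $i$ and strictly smallest in column $j$. In $B$ under $\prec$, column $j$ of $A$ becomes row $j$ of $B$, and being strictly smallest becomes being strictly $\prec$-largest. Likewise, row $i$ of $A$ becomes column $i$ of $B$, and being strictly largest becomes being strictly $\prec$-smallest. So $B_{j,i}$ is a strict saddlepoint of $B$, and the converse holds by the same argument. The entries of $B$ remain pairwise distinct under $\prec$.

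Next we apply Theorem~\ref{thm:wide-reduction} to $B$, which has height $w$ and width $h$. The hypothesis $h\geq 6\kappa w$ is exactly the required condition ``width $\geq 6\kappa\cdot$height''. The theorem runs in $O(w+h)$ time and returns at least $\ceil{h/(4\kappa)}$ columns of $B$, none of which contains a saddlepoint of $B$. These columns are rows of $A$. By the correspondence above, none of them contains a saddlepoint of $A$.

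The only point needing care is the implementation, which must not cost extra time. We never materialize $B$. Each query to $B_{j,i}$ is answered by reading $A_{i,j}$, and each comparison under $\prec$ is one comparison of $A$ with its outcome flipped. This keeps the constant-time comparison model, so the total time stays $O(h+w)$.
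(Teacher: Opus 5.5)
Your proposal is correct and matches the paper, which obtains the corollary by applying Theorem~\ref{thm:wide-reduction} to $A^\mathsf{T}$ with the order of the entries reversed ($-A^\mathsf{T}$ for numeric entries), exactly as you do. Your explicit checks are the details the paper leaves implicit: that strict saddlepoints correspond under this transformation, that the hypothesis $h\geq 6\kappa w$ is the width condition for the transposed matrix, and that the transposition can be done implicitly at no extra cost.
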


\section{The linear-time algorithm}\label{sec:overall}

We can now prove the main theorem.

\begin{proof}[Proof of Theorem~\ref{thm:main}]
Let $\kappa$ be the constant from Lemma~\ref{lem:list-selection}, and let
\[
\alpha = 1-\frac{1}{4\kappa}.
\]

We assume that $n$ is larger than a suitable absolute constant (smaller matrices can be handled directly by brute force). Recall that, without loss of generality, the entries of $A$ are distinct.

We first describe one round of the algorithm and show that it reduces both dimensions to at most $\alpha n$ in $O(n)$ time. Let
\[
t=\ceil{\frac{\ln(6\kappa)}{\ln(4/3)}},
\]
and apply Lemma~\ref{lem:one-sided} $t$ times. Since $t$ is a constant and every intermediate matrix has at most $n$ rows and $n$ columns, this takes $O(n)$ time. There are two cases.

\smallskip
\noindent\textbf{Case 1: both dimensions are reduced.}
If at least one application deletes rows and at least one application deletes columns, then both dimensions are at most $3n/4$. Since $\kappa\geq1$, we have $3/4\leq\alpha$, so both dimensions are at most $\alpha n$.

\smallskip
\noindent\textbf{Case 2: only one dimension is reduced.}
Suppose, without loss of generality, that every application deletes rows. The height becomes at most
\[
\left(\frac{3}{4}\right)^t n\leq \frac{n}{6\kappa},
\]
while the width remains $n$. Hence the resulting $h\times w$ matrix satisfies $w=n\geq 6\kappa h$. By Theorem~\ref{thm:wide-reduction}, we can delete at least $\ceil{w/(4\kappa)}$ columns in $O(h+w)=O(n)$ time. The new width is at most
\[
w-\ceil{\frac{w}{4\kappa}}
\leq \left(1-\frac{1}{4\kappa}\right)w
=\alpha n.
\]
Since $h\leq n/(6\kappa)\leq\alpha n$, the height is also at most $\alpha n$. The case in which every application deletes columns is symmetric, using Corollary~\ref{cor:wide-symmetric}.

Thus one round produces a submatrix whose two dimensions are at most $\alpha n$, in $O(n)$ time, while preserving the original saddlepoint if one exists. If necessary, restore deleted rows or columns on the shorter side until the matrix is square. Its side length is still at most $\alpha n$. Then repeat the round.

When the matrix has constant size, solve it directly. Alternatively, one can stop already once the matrix has size $n/\log n \times n/\log n$ and apply the $O(N\log N)$-time algorithm from \cite{BienstockEtAl,ByrneVaserstein}. If the reduced matrix has no saddlepoint, then the original matrix had none. Otherwise, let $s$ be its saddlepoint and verify $s$ in the original matrix $A$: if $s$ is a saddlepoint of $A$, we return it; otherwise, $A$ has no strict saddlepoint.

The running time satisfies
\[
T(n)\leq T(\alpha n)+O(n),
\]
so $T(n)=O(n)$.
\end{proof}

\section{Concluding remarks}

The algorithm closes the deterministic complexity gap for strict saddlepoints: randomization is not needed to match the natural linear-time lower bound. As in the earlier work, the result immediately extends to rectangular matrices by a simple reduction to square matrices~\cite{LlewellynToveyTrick,BienstockEtAl}: given an $m\times n$ matrix $A$, we can identify its strict saddlepoint, or report that none exists, deterministically in $O(m+n)$ time.

The only non-elementary black box in our algorithm is linear-time selection from sorted lists, which can itself be implemented using the simplified soft-heap approach of Kaplan et al.~\cite{KaplanEtAl}.

\bibliographystyle{plainurl}
{
\bibliography{references}
}

\end{document}